\newtheorem{theorem}{Theorem}
\newtheorem{definition}{Definition}
\newcommand{\comp}{\mathcal{C}}
\algrenewcommand\textproc{} 
\DeclareMathOperator*{\argmin}{arg\,min}
\title{Dynamic Agent Grouping ECBS:\\Scaling Windowed Multi-Agent Path Finding with Completeness Guarantees}
\author{
    Tiannan Zhang\textsuperscript{\rm 1}, Rishi Veerapaneni\textsuperscript{\rm 1}, Shao-Hung Chan\textsuperscript{\rm 2}, Jiaoyang Li\textsuperscript{\rm 1}, Maxim Likhachev\textsuperscript{\rm 1}
}
\begin{document}

\maketitle

\begin{abstract}
Multi-Agent Path Finding (MAPF) is the problem of finding a set of collision-free paths for a team of agents. Although several MAPF methods which solve full-horizon MAPF have completeness guarantees, very few MAPF methods that plan partial paths have completeness guarantees.
Recent work introduced the Windowed Complete MAPF (WinC-MAPF) framework, which shows how windowed optimal MAPF solvers (e.g., SS-CBS) can use heuristic updates and disjoint agent groups to maintain completeness even when planning partial paths (Veerapaneni et al. 2024). A core limitation of WinC-MAPF is that they required optimal MAPF solvers. Our main contribution is to extend WinC-MAPF by showing how we can use a bounded suboptimal solver while maintaining completeness. 
In particular, we design Dynamic Agent Grouping ECBS (DAG-ECBS) which dynamically creates and plans agent groups while maintaining that each agent group solution is bounded suboptimal. We prove how DAG-ECBS can maintain completeness in the WinC-MAPF framework.
DAG-ECBS shows improved scalability compared to SS-CBS and can outperform windowed ECBS without completeness guarantees. More broadly, our work serves as a blueprint for designing more MAPF methods that can use the WinC-MAPF framework.
\end{abstract}

%

\section{Introduction}
The rise of capable individual robots has increased the prevalence of autonomous teams of robots. A core problem for managing teams of robots moving in a shared environment is Multi-Agent Path Finding (MAPF), which is the problem of finding a set of collision-free paths, one for each agent, to move from its start location to its goal location. In areas with congestion, MAPF is tough as it requires complex reasoning of all agents' actions, which grows exponentially with respect to the number of agents.

However, the past decade of MAPF research has led to significant progress in the field. 
Historical methods typically employed greedy planning that could scale to dozens of agents without collisions, but failed in congested scenarios \cite{erdmann1987multiple,cooperativeSilver2005}. Current modern methods can find full-horizon paths (i.e., complete paths from start to goal locations) for hundreds of agents in congested scenarios within seconds. Additionally, several methods provide theoretical completeness guarantees, which means that given enough time and memory, these methods are guaranteed to find a solution if it exists \cite{barer2014suboptimalecbs,li2021eecbs,effectiveCBS,okumura2023lacam}.


In situations where the planning time is too short (or where congestion is too large), planning a full-horizon path is infeasible. Thus, existing works in these situations focus on ``windowed MAPF," which tries to find only collision-free paths for the next $W$ timesteps \cite{rhcrLi2020,jiang2024scaling_mapf_competition}. Windowed MAPF, therefore, relaxes the planning problem and makes it more tractable.
Although windowed MAPF has makes it easier to plan, the main drawback in most windowed approaches is that they are theoretically incomplete. In practice, this means windowed solvers get stuck in deadlock (agents stuck waiting for each other to move out of the way) or livelock (agents revisiting the same locations) as their windowed horizon leads to myopic behavior~\cite{rhcrLi2020}.

Recently, the Windowed Complete MAPF (WinC-MAPF) framework has shown how an optimal Windowed MAPF solver, which they term ``Action Generator" (AG), can be used with heuristic updates and disjoint agent groups to be theoretically complete~\cite{veerapaneni2024winc_mapf}. Their main insight is that by updating the heuristic values of previously seen agent configurations (i.e., joint state, discussed formally in Section \ref{sec:winf-mapf-background}), the optimal windowed MAPF AG will prefer to visit new configurations and, therefore, get out of deadlock/livelock. They require an optimal AG for their proof of completeness and thus create Single-Step CBS (SS-CBS) as one such AG, and show how SS-CBS is able to outperform naive windowed CBS across a variety of maps. 

Our main contribution is to broaden the WinC-MAPF framework and show how we can use a bounded suboptimal AG while still maintaining completeness guarantees. In particular, we design Dynamic Agent Grouping ECBS (DAG-ECBS), which dynamically creates and plans agent groups. Critically, DAG-ECBS maintains that each group's windowed solution is bounded suboptimal. We then prove how DAG-ECBS can maintain completeness in the WinC-MAPF framework. We empirically show how it can improve the success rate compared to naive windowed ECBS without completeness guarantees. More generally, DAG-ECBS and our proof serve as a blueprint for designing more MAPF algorithms within the WinC-MAPF framework.


\section{Preliminaries}
\subsection{Problem Formulation}
Multi-Agent Path Finding (MAPF) is the problem of finding collision-free paths for a group of $N$ agents ${i = 1, \dots, N}$, that takes each agent from its start location $\comp_i^{\text{start}}$ to its goal location $\comp_i^{\text{goal}}$. Time is discretized into timesteps.
We define the MAPF search problem in the joint configuration space where a configuration at timestep $t$, denoted as $\comp^t$, is the locations of all agents at timestep $t$, i.e. $\comp^t = [\comp^t_1, \comp^t_2, ..., \comp^t_N]$.

In traditional 2D MAPF, the environment is discretized into grid cells. Agents are allowed to move in any cardinal direction or wait in the same location. A valid solution is a sequence of collision-free configurations $\comp^{0:T} = \{ \comp^0, ..., \comp^T \}$ where $\comp^0_i = \comp^{\text{start}}_i$, $\comp^T_i = \comp_i^{\text{goal}}$, for all agents $i$ and $T$ is the length of the sequence. Critically, agents must avoid static obstacles as well as vertex collisions (when $\comp^t_i = \comp^t_{j \neq i}$) and edge collisions (when $\comp^t_i = \comp^{t+1}_j \wedge \comp^{t+1}_i=\comp^t_j$) for all timestep $t$.
We define the agent transition cost function as $c(\comp_i^t,\comp_i^{t+1})=1$ unless the agent is resting at its goal (where $c(\comp_i^{t},\comp_i^{t+1})=0$ if $\comp_i^{t} = \comp_i^{t+1} = \comp^{\text{goal}}_i$). 
We focus on finding a solution $\comp^{0:T}$ that minimizes the typical ``sum-of-cost" objective $\sum_{i=1}^N \sum_{t=0}^{T-1} c(\comp_i^t,\comp_i^{t+1})$.

Windowed MAPF simplifies the problem by only resolving collisions for the first $W$ timesteps, after which agents are assumed to move along their individual paths (ignoring collisions). Windowed MAPF relies on interleaving partial planning and execution to reach the final goal locations. 
The windowed MAPF objective is to plan $\comp^{0:W}$ which tries to minimize $\sum_{i=1}^N ( \sum_{t=0}^{W-1} c(\comp_i^t,\comp_i^{t+1}) + h^{BD}_i(\comp_i^W))$ where $h^{BD}_i(\comp_i^W)$ is the optimal single-agent cost (computed via the backward Dijkstra (BD) algorithm) for agent $i$ to reach its goal from $\comp_i^W$ ignoring other agents. Computing the single-agent backward Dijkstra's heuristic is cheap and commonly used in current MAPF methods \cite{okumara2022pibt_jair,li2021eecbs}.

In terms of notation, subscripts always denote agent indices. Functions using $\comp$ without subscripts are intuitively defined as the sum over agents, e.g., $c(\comp^t, \comp^{t+1}) = \sum_{i=1}^N c(\comp_i^t,\comp_i^{t+1})$. 



\subsection{Related MAPF Works}
There are a plethora of full-horizon MAPF methods that have been proposed. Our work builds on Conflict-Based Search (CBS) \cite{sharon2015cbs} and Enhanced Conflict-Based Search (ECBS) \cite{barer2014suboptimalecbs}. CBS decomposes MAPF into an optimal high-level search over ``Constraint Tree" nodes with space-time constraints, and an optimal low-level single-agent search satisfying constraints. ECBS replaces these optimal searches with bounded suboptimal focal searches \cite{PearTPAMIl1982FocalSearch}.


\paragraph{Dynamically Grouping Agents: }
Several existing works dynamically group and replan agents that collide.
Independence Detection in Operator Decomposition \cite{standley2010operater_decomposition} is an early work that grouped and replanned agents that collided.
Meta-Agent CBS (MA-CBS) \cite{sharon2015meta_agent_cbs} specifically groups agents in CBS that repeatedly collide into meta-agents and then replans these agents using a joint-state-space MAPF solver such as A* or EPEA* \cite{goldenberg2014epea}.
Building on MA-CBS, Nested ECBS (NECBS) \cite{chan2022nested_ecbs} combines the meta-agent merging strategy with bounded-suboptimal search. Instead of using optimal joint-state-space solvers for meta-agents, NECBS employs ECBS to resolve internal conflicts within meta-agents. 

\paragraph{Windowed MAPF:}
There are a variety of MAPF works that utilize windowed planning. Windowed Hierarchical Cooperative A* \cite{cooperativeSilver2005} is a windowed variant of Hierarchical Cooperative A* which uses Prioritized Planning \cite{erdmann1987multiple}. Bounded Multi-Agent A* \cite{mapf-real-time-bmaa-2018} proposes that each agent runs its own limited horizon real-time planner considering other agents as dynamic obstacles. Rolling Horizon Conflict Resolution (RHCR) applies a planning horizon for lifelong MAPF planning and replans paths at repeated intervals \cite{rhcrLi2020}. The winning submission \cite{jiang2024scaling_mapf_competition} to the Robot Runners competition, due to the tight planning time constraint, leveraged windowed planning of PIBT \cite{okumara2022pibt_jair} with LNS \cite{li2021mapf-lns}. 
All these methods mention that deadlock or livelock can occur and is a problem that reduces the success rate in instances with congestion.

Planning and Improving while Executing \cite{zhang2024pie} is a recent work that attempts to quickly generate an initial full plan using LaCAM \cite{okumura2023lacam} and then refines it during execution using LNS2 \cite{li2022mapf-lns2}. 
However, if a complete plan cannot be found, the method resorts to using the best partial path available, making it incomplete.

Recently, the Windowed Complete MAPF framework was proposed, which proved completeness for windowed optimal MAPF solvers \cite{veerapaneni2024winc_mapf}. 
After discussing relevant single-agent heuristic search methods, we will revisit this work in more detail.

\subsection{Related Single Agent Works}
Real-Time Heuristic Search (RTHS) is a problem formulation in single-agent search where the agent is required to iteratively plan and execute partial paths (since it does not have enough time to plan a full path). The key innovation in single-agent RTHS is that the agent updates (increases) the heuristic values of visited states. RTHS algorithms prove how this update prevents deadlock/livelock and enables completeness \cite{korf1990_lrta,koenig2006_rtaa,rivera2013weighted_real_time_search}.

\begin{figure}[t!]
    \centering
    \includegraphics[width=0.95\linewidth]{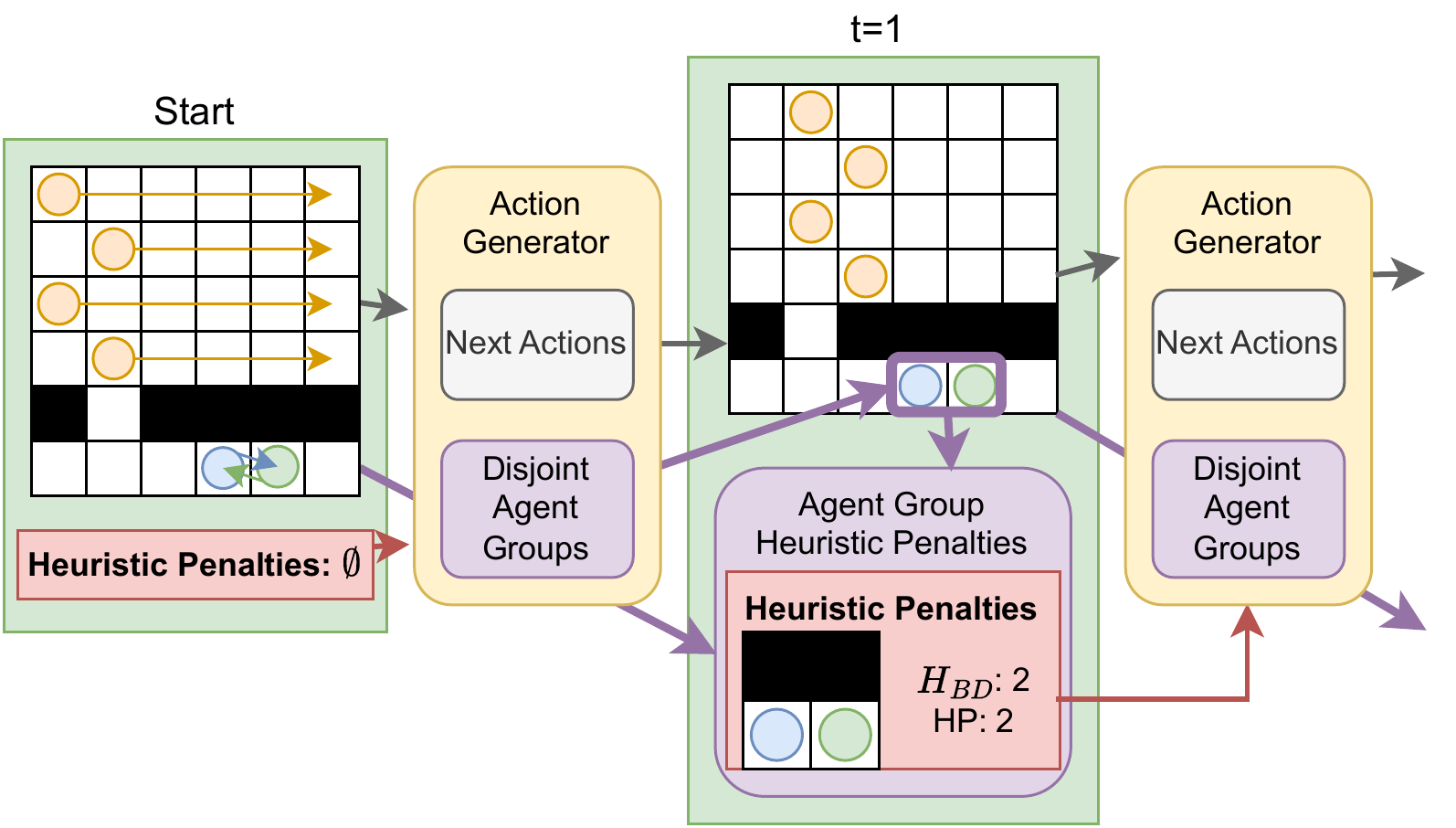}
    \vspace{-0.5em}
    \caption{The iterative planning and execution loop of WinC-MAPF with heuristic penalties and disjoint agent groups, figure taken from \citet{veerapaneni2024winc_mapf}.}
    \label{fig:winc-mapf-iterative}
    \vspace{-0.5em}
\end{figure}

\subsection{Windowed Complete MAPF} \label{sec:winf-mapf-background}
Our work directly extends the Windowed Complete MAPF (WinC-MAPF) framework. Due to the several moving parts of WinC-MAPF and the page limit, we provide a reduced summary of the work. Readers unfamiliar with WinC-MAPF are strongly encouraged to read the original paper, which has in-depth and robust explanations and examples.

The WinC-MAPF framework leverages ideas from real-time heuristic search to enable completeness for optimal windowed MAPF planners, which they term ``Action Generator" (AG).
Specifically, given an initial configuration $\comp^0$, they call the AG which returns the configuration $\comp^{W*}$ at the end of a partial solution $\comp^{0:W}$ that optimally minimizes $\sum_{i=1}^N ( \sum_{t=0}^{W-1} c(\comp_i^t,\comp_i^{t+1}) + h^{BD}_i(\comp_i^W))$, i.e., $\comp^{W*}$ is the best windowed configuration. Fig. \ref{fig:winc-mapf-iterative} depicts an overview of the WinC-MAPF framework.

\paragraph{Heuristic Update:} Given the optimal windowed solution $\comp^{0:W}$, they update the heuristic value of configuration $\comp^0$
\begin{equation} \label{eq:update}
h(\comp^0) \gets U(\comp^0, \comp^{W}) := \max(h(\comp^0), c(\comp^0, \comp^{W} )+ h(\comp^{W})),
\end{equation}%
where the heuristic $h(\comp)$ for all configurations are initially set to $\sum_{i=1}^N h_i^{BD}(\comp_i)$ and gradually increases as agents visit configurations.
For notional convenience, they denote $h(\comp) = h^{BD}(\comp) + h_p(\comp)$, where a ``heuristic penalty" term $h_p(\comp) > 0$ denotes an increase in heuristic over the default backward Dijkstra's value due to the update function $U$.


\paragraph{Disjoint Agent Groups:} The main drawback with this approach is that the number of configurations $\comp$ grows exponentially with the number of agents. Thus, escaping a local minima (e.g., deadlock/livelock in our case) can require updating the heuristic of an impractical number of configurations.
Thus, WinC-MAPF introduces the idea of disjoint agent groups, which essentially decomposes a windowed MAPF problem into sets of independent agent groups. Instead of updating the heuristic of the entire configuration $\comp$, they update the heuristics for $\comp_{Gr_i}$ where $\comp_{Gr_i}$ is the configuration of a group of agents $Gr_i$. In Fig. \ref{fig:winc-mapf-iterative}, this corresponds to detecting and updating the heuristic for the conflicting blue and green agents' grouped configuration as opposed to the entire joint configuration.

\begin{definition}[Disjoint Agent Groups]
    Given a configuration transition $\comp^0 \rightarrow \comp^W$, and set of disjoint agent groups $\{Gr_i\}$, each disjoint agent group $Gr_i$ satisfies the property that for each agent $j$ with transition $\comp^0_j \rightarrow \comp^W_j$ in $Gr_i$, there cannot exist another agent in a different group $Gr_k$ that blocked agent $j$ from picking a better path.
\end{definition}

Conceptually, this means that each group's decision to make $\comp^0_{Gr_i} \rightarrow \comp^W_{Gr_i}$ is independent of agents in other groups. Given a transition from $\comp^0 \rightarrow \comp^W$ and a set of disjoint agent groups $\{Gr_i\}$, we apply the update Eq. \ref{eq:update} to each group transition, i.e., $h(\comp^0_{Gr_i}) \gets U(\comp^0_{Gr_i},\comp^W_{Gr_i})$. This significantly improves efficiency by increasing the heuristic value of multiple configurations/groups at once.  
Given these group heuristic values, we compute the heuristic of a joint configuration $h(\comp)$ via the sum of each group's heuristic, i.e. $h(\comp) = \sum_{i} h(\comp_{Gr_i}) = h^{BD}(\comp) + \sum_i h_p(\comp_{Gr_i})$ for disjoint agent groups $\{Gr_i\}$ at the configuration.

A core component of WinC-MAPF is detecting / computing disjoint agent groups. Instead of computing disjoint groups where agents are not interacting, it is easier to determine coupled agent groups where they could be interacting.

\begin{definition}[Coupled Agents] \label{def:coupled-agents}
    Given a configuration transition $\comp \rightarrow \comp^W$, an agent $i$ is coupled with $j$ if $j$ prevents $i$ from choosing a better path or vice-versa.
\end{definition}

Note that coupled agents must be in the same disjoint agent group $Gr$. Importantly, disjoint agent groups do not need to solely consist of coupled agents, i.e., it is allowed for extra non-coupled agents (e.g., agents independent of all others) to be in a disjoint group.

\paragraph{Action Generator Requirements:} 
Combining everything together, the WinC-MAPF framework proves completeness with an optimal windowed MAPF Action Generator which (1) finds $\argmin_{\comp^W} c(\comp^0, \comp^W) + h(\comp^W)$, and (2) computes disjoint agent groups for $\comp^0 \rightarrow \comp^W$. The WinC-MAPF framework iteratively calls the AG to obtain a windowed plan, updates the heuristic of each disjoint agent group's configuration, executes the plan, and repeats (Fig. \ref{fig:winc-mapf-iterative}).

\paragraph{Single-Step CBS:} The WinC-MAPF paper also introduces Single-Step CBS (SS-CBS), which is a single-step optimal windowed Action Generator that is able to incorporate heuristic penalties and compute disjoint agent groups.
SS-CBS computes coupled agents (which form disjoint agent groups) by grouping together agents that shared a conflict/constraint during the process of finding the windowed solution.
However, incorporating heuristic penalties is trickier as \citet{veerapaneni2024winc_mapf} proves that they cannot be naively handled in CBS. 
Conceptually, SS-CBS resolves the problem by delaying incurring a heuristic penalty via a ``heuristic" conflict and constraint in the high-level search. A heuristic conflict occurs when agents match a group configuration with a non-zero $h_p(\comp_{Gr})$. SS-CBS then allows agents to try to avoid this heuristic penalty by replanning and avoiding $\comp_{Gr}$, or forces agents to stay at $\comp_{Gr}$ and incur the penalty.



The main restriction with WinC-MAPF is the requirement for an optimal windowed planner. In particular, although SS-CBS has a better success rate than naive windowed CBS, SS-CBS starts timing out when dealing with tougher congestion.


\paragraph{Proof of Completeness:} \label{sec:winc-proof-of-completeness}
The key to proving completeness in WinC-MAPF is showing that $h(\comp_{Gr_i}) \leq h^*(\comp_{Gr_i})$ holds even when updating the heuristic values, i.e., the heuristic of each disjoint agent group configuration $\comp_{Gr_i}$ is always admissible.
If that holds, then the heuristic $h(\comp)$ satisfies
\begin{equation} \label{eq:sum-of-groups-admissible}
    h(\comp):= \sum_{i} h(\comp_{Gr_i}) \leq \sum_{i} h^*(\comp_{Gr_i}) \leq h^*(\comp)
\end{equation}
for a set of disjoint agent groups $Gr_i$ that cover all agents.

Given this fact, proving completeness follows standard single-agent real-time heuristic search logic. The main intuition is that if the agents never reach the goal, they must be infinitely stuck cycling through a finite set of states. Across every cycle, we can show that at least one heuristic value must increase. This means that the heuristic values in the cycle are getting arbitrarily high, which contradicts heuristic values being admissible. Thus, the agents cannot be stuck in a loop and must eventually read the goal. Readers are encouraged to read \citet{veerapaneni2024winc_mapf} for more details about WinC-MAPF in general, and specifically their Appendix B.1 for a formal proof.

\section{Dynamic Agent Grouping ECBS}

The WinC-MAPF framework requires an \emph{optimal} Action Generator, which severely limits the range of solvers that can be used. This section first describes why naively trying to use a bounded suboptimal solver like ECBS will not satisfy the proof for completeness due to individual group's heuristics violating $w$-suboptimality (for the duration of this paper, $w$ denotes a suboptimality weight while $W$ denotes the planning window size). Then we describe our solution, Dynamic Agent Grouping ECBS (DAG-ECBS), which maintains that each group's heuristic is $w$-bounded, and show how it is complete within the WinC-MAPF framework.


\paragraph{Issue with Naïvely Using Bounded-Suboptimal Search: }
Conceptually, the WinC-MAPF proof in Section \ref{sec:winc-proof-of-completeness} relies on the fact that the heuristic $h(\comp)$ remains bounded throughout the planning and update procedure (i.e., Eq. \ref{eq:sum-of-groups-admissible}). This is done by showing that $h(\comp_{Gr}) \leq h^*(\comp_{Gr})$. If we want to use a bounded suboptimal search, we can extend/re-use this proof if we can show that $h(\comp) \leq w \cdot h^*(\comp)$ (i.e., is $w$-admissible), which we can do if we show that each group's heuristic $w$-admissible, i.e. $h(\comp_{Gr}) < w \cdot h^*(\comp_{Gr})$. Unfortunately, we show here that using a bounded-suboptimal solver such as ECBS does \textit{not} maintain this.

Applying a bounded-suboptimal solver such as ECBS with a single factor~$w$ over all agents to the WinC-MAPF framework ensures only that the total cost is within $w$ of optimal; individual groups can still be worse than $w$ times their own optimum.  
For example, if two groups have optimal costs $10$ and $40$, a global $w\!=\!2$ bound accepts any plan costing up to $100$.  A solution costing $30$ and $45$ for the two groups (total $75$) is globally acceptable, yet the first group violates its local $w$-bound of $20$, breaking the completeness proof when heuristics are updated.

This unfortunately means that we cannot use the existing proof of completeness from WinC-MAPF. Further attempts on our end for alternative proofs of completeness were unfruitful (and likely why the original WinC-MAPF paper focused only on optimal action generators).


\begin{algorithm}[t]
\textbf{Parameters}: Current configuration $\comp^0$, heuristic penalties $HPs$ \\ \noindent
\textbf{Output}: $w$-bounded suboptimal Windowed solution $\comp^{1:W}$, disjoint agent groups disjointGroups
\caption{Dynamic Agent Grouping ECBS}
\label{alg:dag-ecbs}
\begin{algorithmic}[1]
\State $\comp^{1:W} \gets \emptyset$ \Comment{Initialize empty windowed path}
\State activeGroups $\gets \{\{a\}\mid a\in\mathcal{A}\}$ \Comment{Singleton groups}
\State disjointGroups $\gets \{\}$
\While{\textit{activeGroups}$\neq\emptyset$}
    \State $Gr\gets$ \textit{activeGroups.pop()} \Comment{Current group to plan}
    \State $\hat\comp^{1:W}_{Gr} \gets$ \Call{Group-ECBS}{$Gr,\comp^{1:W}$, $HPs$}
    \State $Gr_{int} \gets \{Gr' | \comp^{1:W}_{Gr'} \text{~conflicts with~} \hat\comp^{1:W}_{Gr}\}$
    \If{$\hat\comp^{1:W}_{Gr} \neq \emptyset$ \textbf{and} $A_{int}=\emptyset$} \Comment{Successfully planned and not coupled with other agents}
        \State $\comp^{1:W}_{Gr} \gets \hat\comp^{1:W}_{Gr}$ \Comment{Update paths}
        \State disjointGroups.insert($Gr$) \Comment{Tentatively a disjointGroup}
    \ElsIf{$A_{int} \neq \emptyset$} \Comment{Coupled with other agents}
        \State merge $Gr$ with groups $Gr' \in Gr_{int}$ and re-insert into \textit{activeGroups}. Also remove $Gr'$ from activeGroups and disjointGroups if inserted.
    \Else \Comment{Failed to find a solution}
        \State \Return \textsc{Failure}
    \EndIf
\EndWhile
\State \Return $\comp^{1:W}$, disjointGroups
\end{algorithmic}
\end{algorithm}

\subsection{Dynamic Agent Grouping ECBS}
Dynamic Agent Grouping ECBS (DAG-ECBS) is a bounded suboptimal search that computes disjoint agent groups and returns a $w$-bounded suboptimal solution per agent group. DAG-ECBS is able to do this by planning agent groups \textit{independently} instead of planning across all groups at once. 
The idea of planning groups of agents is shared with Independence Detection \cite{standley2010operater_decomposition}, and in the context of CBS is similar to Meta-Agent CBS (MA-CBS)~\cite{sharon2015meta_agent_cbs} and Nested ECBS (NECBS)~\cite{chan2022nested_ecbs} which plan groups of colliding agents.
The purpose of grouping agents in MA-CBS and NECBS is to speed up the search. However, in our case we group agents as we need to compute a $w$-bounded suboptimal solution for each disjoint agent group.

Conceptually, DAG-ECBS essentially runs ECBS on individual agent groups (starting with each agent in its own group) and iteratively merges and replans colliding groups of agents. 
Algorithm \ref{alg:dag-ecbs} describes DAG-ECBS's main procedure of planning groups of agents in more detail. We maintain a queue of active groups ``activeGroups" which is initialized with singleton groups (i.e., we assign each agent its own individual group initially). We then proceed to plan each group $Gr$ in activeGroups (lines 4-6) by calling the Group-ECBS function. Group-ECBS finds a bounded suboptimal windowed solution for the specific agents in $G$.

The group's windowed plan is compared against other planned groups' windowed plans, and groups that \textit{conflict} are collected (line 7). Importantly, conflicts include heuristic conflicts in addition to standard vertex and edge conflicts. These groups are coupled with $Gr$. If there are no such groups, then $Gr$ is currently disjoint and the windowed plan is recorded (lines 8-10). If there are coupled groups, then the groups are merged with $Gr$, and then the new group is inserted back into the queue for later planning (lines 11-12).

This process means that DAG-ECBS starts with singleton groups but iteratively merges and replans groups that interact with each other. The final result is a set of disjoint agent groups and their collision-free paths, with the guarantee that each group's solution is $w$-bounded suboptimal.

\subsection{Group-ECBS} \label{sec:group-ecbs}
Group-ECBS is nearly identical to calling regular windowed ECBS on a group of agents, except for the following changes. First, and identical to Single-Step CBS in WinC-MAPF, Group-ECBS incorporates heuristic penalties and constraints to return bounded suboptimal solutions that incorporate changes in heuristics. Second, and unique to Group-ECBS, we modify the high-level and low-level focal search as we maintain that $h(\comp)$ is $w$-admissible. This modification is required for the proof of completeness.
%

Conceptually (but notationally imprecise), focal search \cite{PearTPAMIl1982FocalSearch} maintains $w$-bounded suboptimality by maintaining a lower bound $c + h$ (given an admissible $h$) and multiplying the quantity by $w$ for the focal threshold. We however maintain that our $h$ is $w$-admissible, so we instead keep track of and base our focal threshold on $w \cdot c + h$ to maintain that our solution is $w$-bounded suboptimal. 

We now proceed to define the focal search mathematically. For notational convenience, $n$ denotes a Constraint Tree (CT) node where conditioning on $n$ (which is denoted as $|n$) represents that $n$'s constraints must be satisfied.
Additionally since we want $h(\comp)$ to be $w$-admissible, we initialize $h(\comp)$ to $w \cdot h^{BD}(\comp)$. We thus redefine the heuristic penalty $h_p(\comp)$ to be the added residual, i.e. $h(\comp) = h_p(\comp) + w\cdot h^{BD}(\comp)$.

\paragraph{Low-Level Search: }
ECBS uses a focal search consisting of an ``Anchor" queue and a ``Focal" queue in the low-level search for planning single agent paths.
The priority function used in Anchor is the standard A* priority (Eq. \ref{eq:low-level-ecbs-anchor}). Eq. \ref{eq:low-level-ecbs-focal} describes which states $v$ are allowed in the Focal queue (e.g., which states agent $i$ can go to while maintaining the solution path is $w$-bounded suboptimal).
\begin{align}
    & \text{Anchor: } F_1(v|n) := c(\comp_i^0,v|n) + h_i^{BD}(v) \label{eq:low-level-ecbs-anchor} \\
    & \text{Focal: } \{v |c(\comp_i^0,v|n) \! + \! h_i^{BD}(v) \! \leq \! w \!\cdot \! \min_{v \in \text{Anchor}} \! F_1(v|n) \} \label{eq:low-level-ecbs-focal}
\end{align}
The low-level search in Group-ECBS in DAG-ECBS requires a subtle change to the Focal criteria. In particular, Eq. \ref{eq:low-level-dagecbs-focal}'s sole change is weighting $h_i^{BD}(v)$ by $w$.
\begin{align}
    & \text{Focal: } \{v | c(\comp_i^0,v|n) + w\!\cdot\! h_i^{BD}(v) \leq w\!\cdot\! \min_{v \in \text{Anchor}} \!
    F_1(v|n)\} \label{eq:low-level-dagecbs-focal}
\end{align}

\paragraph{High-Level Search: }
ECBS also uses a focal search for the high-level constraint search, with a similar Anchor queue and Focal queue. 
The priority function $F_2(n)$ used in Anchor is described in Eq. \ref{eq:high-level-ecbs-anchor} where $\min_{v \in \text{Anchor}_i} F_1(v|n)$ can be interpreted as the lower bound of a solution for agent $i$ that satisfies CT node $n$'s constraints. Eq. \ref{eq:high-level-ecbs-focal} describes which nodes are allowed in the Focal queue where $\comp^W$ is the configuration in $n'$.
\begin{align}
    & \text{Anchor: } F_2(n) := \sum_{i} \min_{v \in \text{Anchor}_i} F_1(v|n) \label{eq:high-level-ecbs-anchor} \\
    & \text{Focal: } \! \{n' | c(\comp^0,\comp^W) \! + \! h^{BD}(\comp^W) \! \leq \! w \! \cdot \! \min_{n \in \text{Anchor}} \! F_2(n)\} \label{eq:high-level-ecbs-focal}
\end{align}
Conceptually, Group-ECBS in DAG-ECBS requires also keeping track of heuristic constraints/penalties that increase the heuristic of the group. Eq. \ref{eq:high-level-dagecbs-anchor} includes $h_p(n)$ which shows how we incorporate heuristic penalties according to $n$'s constraints. Eq. \ref{eq:high-level-dagecbs-focal} shows how this value is used to determine the modified Focal threshold.
\begin{align}
    & \text{Anchor: } F_3(n) := h_p(n) + w \cdot \sum_i \min_{v \in \text{Anchor}_i} F_1(v|n) \label{eq:high-level-dagecbs-anchor} \\
    & \text{Focal: } \! \{ n' | c(\comp^0,\!\comp^W) \! + w \! \cdot \! h^{BD\!}(\comp^W) \! + \! h_p(n) \! \leq \!\!\!\!\! \min_{n \in \text{Anchor}} \!\!\!\! F_3(n) \} \label{eq:high-level-dagecbs-focal}
\end{align}

\subsection{DAG-ECBS within WinC-MAPF}
We directly use DAG-ECBS as an Action Generator within the WinC-MAPF framework of iterative planning, execution, and heuristic updates, and prove in the next section how we get completeness guarantees. 
Specifically, given a configuration $\comp^0$, we use DAG-ECBS to return the next windowed solution $\comp^{1:W}$ and disjoint agent groups. We then move the agents one step, update the heuristic of $\comp^0_{Gr}$ for each disjoint agent group $Gr$ via Eq. \ref{eq:update} (i.e. $h(\comp^0_{Gr}) \gets U(\comp^0_{Gr}, \comp^W_{Gr})$), and repeat this process.

We emphasize that DAG-ECBS dynamically groups agents at every planning and execution iteration. This means that agents can be not grouped together for the first few time steps if they are initially far apart, then grouped when they move near each other, and then afterwards be not grouped together if they move apart. 



One subtlety for readers familiar with single-agent RTHS is that we use suboptimality different than single-agent RTHS. Most single-agent RTHS methods use suboptimality to increase the rate of learning / heuristic updates. On the other hand, we use suboptimality to speed up the windowed planner, not to increase the heuristic update.


\subsection{Sketch of Proof of Completeness} \label{sec:proof-of-completeness}
The core of DAG-ECBS's modifications is to show that each group's heuristic remains $w$-admissible after the heuristic update equation is applied. As mentioned earlier, the intuition is that if $h$ is $w$-admissible, then $w \cdot c + h$ gives a $w$-admissible bound. The formal proof is given in the appendix.

With $w$-admissible heuristics and the finite state space, the standard RTHS / WinC-MAPF argument applies: the search cannot loop indefinitely (as this contradicts our proof of the heuristics always being $w$-admissible), so \textsc{DAG-ECBS} will not get stuck in a cycle and will eventually find a solution if one exists. We note that this proof enables designing other bounded-suboptimal windowed MAPF solvers / action generators and is more general than just DAG-ECBS.

One last subtlety is that since DAG-ECBS returns a windowed path, we also update the heuristic values of the intermediate configurations $\comp^{1:W-1}_{Gr}$ based on $\comp^W_{Gr}$ via a slightly modified Eq. \ref{eq:update}, which we detail in the appendix still retains the updated heuristic value is $w$-admissible.


\begin{figure*}[t!]  
    \centering
    \begin{subfigure}{0.95\textwidth}
        \centering
        \includegraphics[width=\textwidth]{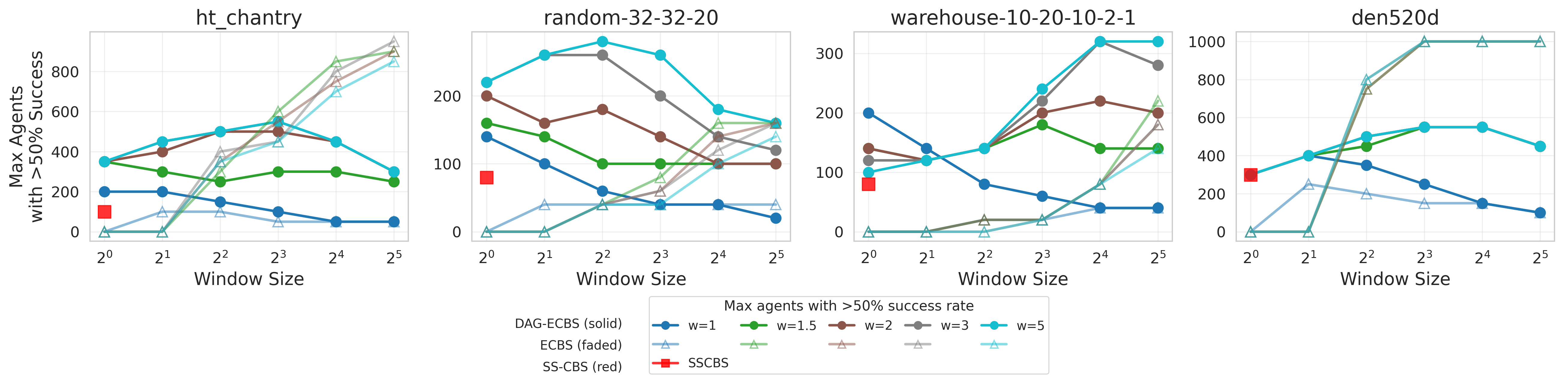}
        \caption{The maximum number of agents DAG-ECBS, windowed ECBS, and SS-CBS can scale to with a $>$ 50\% success rate.}
        \label{fig:success-rates-results}
    \end{subfigure}

    \vspace{1em}
    
    \begin{subfigure}{0.95\textwidth}  
        \centering
        \includegraphics[width=\textwidth]{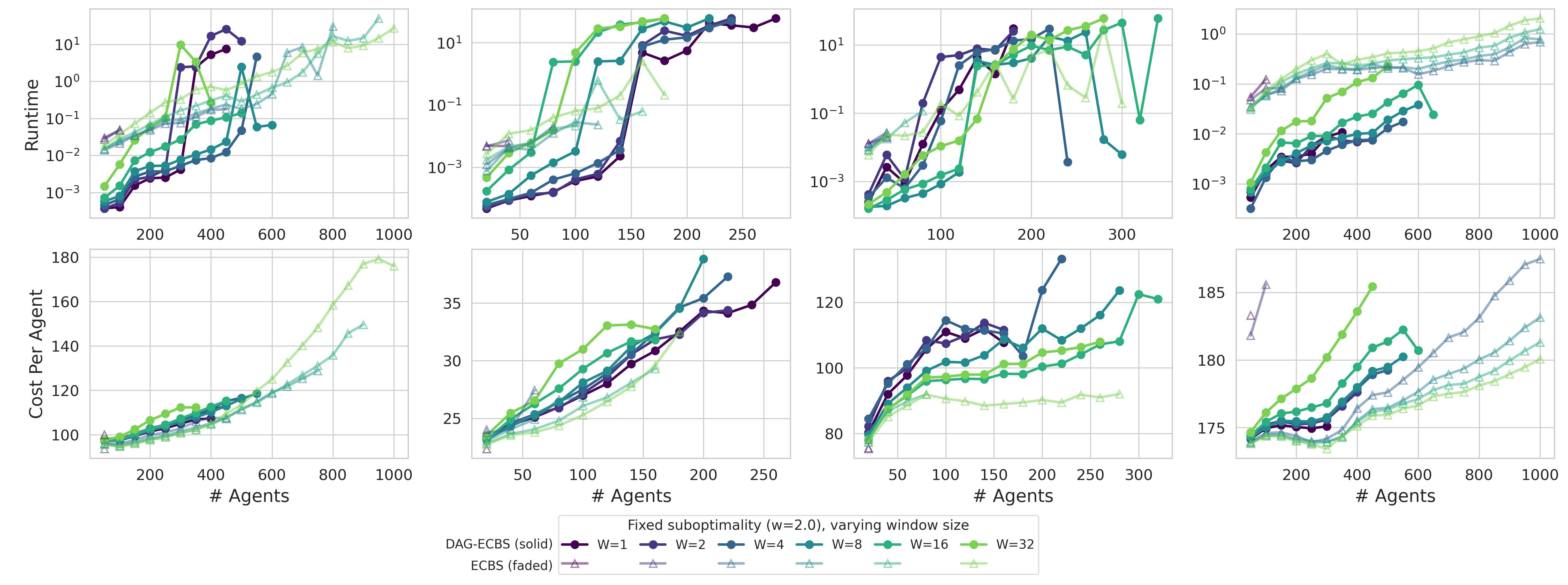}
        \caption{The effect of varying the window size $W$ in DAG-ECBS and windowed ECBS for suboptimality $w=2$.}
        \label{fig:varying-window-results}
    \end{subfigure}
    
    \vspace{1em}  

    \begin{subfigure}{0.95\textwidth}
        \centering
        \includegraphics[width=\textwidth]{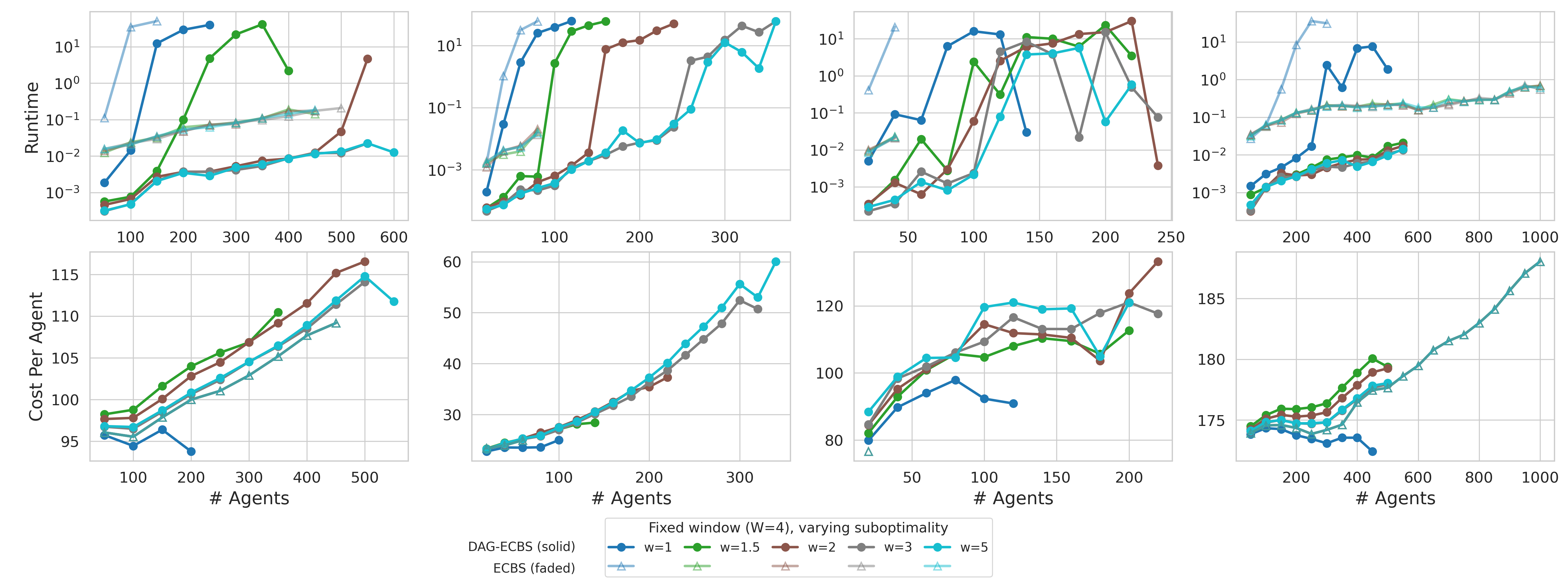}
        \caption{The effect of varying suboptimality $w$ in DAG-ECBS and windowed ECBS for fixed window $W=4$.}
        \label{fig:varying-suboptimality-results}
    \end{subfigure}
    \vspace{-0.5em}
    \caption{Evaluating DAG-ECBS and windowed ECBS across various suboptimalities ($w$), window sizes ($W$), and maps.}
    \label{fig:main-results}
    \vspace*{-1em}
\end{figure*}

\section{Experimental Results}
We evaluate DAG-ECBS and windowed ECBS on standard benchmark maps \cite{stern2019mapfbenchmark} with windows $W=\{1,2,4,8,16,32\}$ and suboptimalities $w=\{1, 1.5, 2, 3, 5\}$. We additionally include Single-Step CBS (SS-CBS) from the WinC-MAPF paper which is an optimal one-step solver with completeness guarantees. All methods interleave planning a windowed path, executing the first step, and repeating, with a 1 minute timeout summed across all planning iterations. Fig \ref{fig:main-results} shows results across different maps, parameters, and metrics.

We note we did not compare against state-of-the-art full-horizon methods (e.g., EECBS \cite{li2021eecbs} or LaCAM \cite{okumura2023lacam}) for two main reasons. First, the contribution of this paper is advancing windowed planning with completeness guarantees. Thus, the main comparison is with prior windowed planning with completeness (SS-CBS) or to test against the naive version without guarantees (windowed ECBS) to evaluate the empirical impact of theoretical guarantees. Second, most MAPF research has focused on full-horizon methods, and thus current full-horizon methods (e.g., EECBS, LaCAM) have better success rates than Fig \ref{fig:main-results}.

\paragraph{Scalability: }
Figure \ref{fig:success-rates-results} depicts the scalability of DAG-ECBS, ECBS, and SS-CBS as we vary the window size (x-axis) and suboptimality $w$ (lines). Note that SS-CBS is an optimal solver with window size one and thus only appears as one point (red). First, comparing DAG-ECBS (solid lines) to SS-CBS, we see that DAG-ECBS outperforms SS-CBS, highlighting the benefit of incorporating suboptimality into the search. Interestingly, DAG-ECBS with $w=1$ and $W=1$ outperforms SS-CBS (where they technically should find equivalent solutions). We surmise this is because planning agent groups helps performance (similar to NECBS \cite{chan2022nested_ecbs}). Second, observing ECBS (faded lines), we notice how performance is most tightly coupled with the window size; performance is poor for small window sizes and increases as the window size increases. The poor performance at small window sizes across all maps reveals how ECBS get stuck in deadlock/livelock in those situations. DAG-ECBS's superior performance show how incorporating heuristic updates reduces these issues.

Taking a step back, we see that DAG-ECBS consistently outperforms ECBS across all window sizes in maps random-32-32-20 and warehouse-10-20-10-2-1, but not for larger window sizes in ht\_chantry and den520d. This is because the random and warehouse map have more congestion with small agents that DAG-ECBS is able to overcome (and that ECBS gets stuck in). The other two maps are larger with less small agent congestion (i.e., congestion will consist of larger agents) where DAG-ECBS starts to timeout while ECBS is able to make progress.

\paragraph{Varying Window Size Analysis:}
Fig \ref{fig:varying-window-results} plots the maximum iteration runtime in seconds across each instance (failures included) and the average cost per agent (failures omitted). Each line denotes a different window size for DAG-ECBS and ECBS with suboptimality $w=2$ (SS-CBS is omitted as it only work with $w=1$). Comparing different colors lines, we see that increasing the window size increases the per-iteration runtime for both ECBS and DAG-ECBS, although it seems to affect the runtime of DAG-ECBS more. Combined with Fig \ref{fig:success-rates-results}, we see that DAG-ECBS fails due to runtime issues.

The effect on window size on cost varies across maps. In warehouse, larger window sizes reduce costs, while in the other three maps, they increase cost. We conceptualized two different explanations that appear to play out in different situations. First, a larger window size enables avoiding myopic decisions and therefore decreases overall solution length. Second, since we are employing suboptimality ($w=2$), a larger window size means that we compute a more suboptimal path and increases overall solution length. 



\paragraph{Varying Suboptimality Analysis:}
Fig \ref{fig:varying-suboptimality-results} keeps the window sized fixed ($W=4$) and varies the suboptimality (lines) of DAG-ECBS and ECBS.
First, observing the runtimes of DAG-ECBS, we see that increasing the suboptimality consistently reduces DAG-ECBS's per-iteration runtime. This boost in performance, especially in comparison to an optimal solver ($w=1$), highlights the importance of being able to use a bounded suboptimal solver for larger window sizes. 

ECBS seems to have a slightly runtime story. Like DAG-ECBS, ECBS with $w=1$ (blue triangle line) times out fast. However, unlike DAG-ECBS, ECBS with $w > 1$ all have similar runtimes (as opposed to DAG-ECBS where different $w>1$ have differing runtimes). We hypothesize this is due to the low-level focal queue. In particular, ECBS and DAG-ECBS have the same focal threshold value, but DAG-ECBS includes a $w \cdot h^{BD}(\comp_i)$ term on the left hand side of the inequality (Eq. \ref{eq:low-level-dagecbs-focal}) while ECBS does not (Eq. \ref{eq:low-level-ecbs-focal}). Since windowed ECBS is planning a partial path, this means many nodes can satisfy the focal threshold. On the flip end, there are less nodes in DAG-ECBS's focal queue. 
Analyzing the solution cost seems to verify this hypothesis. For DAG-ECBS, we see that increasing the suboptimality consistently increases the solution cost on all 4 maps. However for ECBS with $w>1$, all their solution costs seem to be similar.

\section{Conclusion}
The Windowed Complete MAPF (WinC-MAPF) framework
shows how to use windowed planning while retaining completeness guarantees. However, WinC-MAPF required an optimal windowed solver, restricting the types of useable solvers and limiting the scalability of the framework.

Our main contribution is extending the framework by showing how we can use a bounded suboptimal windowed solver, Dynamic Agent Grouping ECBS (DAG-ECBS), that maintains solution guarantees in the WinC-MAPF framework. The key idea in DAG-ECBS is to maintain that each groups heuristic is $w$-admissible by planning agent groups independently, with agent groups getting merged and replanned if they conflict. 

We highlight that the idea of grouping agents via conflicts and planning agent groups is broadly applicable to existing suboptimal MAPF algorithms. In fact, Alg \ref{alg:dag-ecbs} can be viewed as a template by replacing Group-ECBS with other algorithms in Line 6. Thus, DAG-ECBS serves as a template for converting suboptimal MAPF algorithms into Action Generators in the WinC-MAPF framework.
We empirically show that DAG-ECBS improves the scalability of WinC-MAPF compared to the single-step optimal equivalent. We additionally show that DAG-ECBS consistently outperforms windowed ECBS (without completeness guarantees) for small window sizes or in congested maps. 

We see two main avenues for future work. First,  we imagine adapting other more advanced bounded suboptimal MAPF algorithms like EECBS \cite{li2021eecbs} to follow our technique / template (e.g., creating a DAG-EECBS). Second, our dynamic grouping is reminiscent of decentralized multi-agent algorithms where agents near each other communicate (e.g., are grouped) while agents far apart do not. Thus, even though we implement DAG-ECBS as a centralized solver, we imagine that a decentralized version of DAG-ECBS is feasible.

\bibliography{aaai25}

\clearpage

\appendix

\setcounter{figure}{0}
\renewcommand{\thefigure}{A\arabic{figure}}
\setcounter{table}{0}
\renewcommand{\thetable}{A\arabic{table}}

\section{Quick Summary}
\subsubsection{Recommended background readings:} Readers new to MAPF, CBS, or ECBS are recommended to read ECBS \cite{barer2014suboptimalecbs}.
Readers new to single-agent Real-Time Heuristic Search are recommended to read LRTA* \cite{korf1990_lrta}. Readers new to Windowed Complete MAPF (WinC-MAPF) are strongly recommended to read \citet{veerapaneni2024winc_mapf}.

\subsubsection{Motivation in respect to prior work:} The majority of MAPF methods find collision-free paths to goals. However, this can be impractical in real world use cases with limited planning time and long solution paths. Instead, a more common real world scheme is to utilize windowed planning where the planner only reasons about a fixed $W$ timestep window and finds collision-free partial paths.
A key issue with windowed approaches is their myopic planning can result in deadlock or livelock. More broadly, until recently, all prior windowed MAPF solvers lacked provable completeness guarantees. 

Recently, the WinC-MAPF framework \cite{veerapaneni2024winc_mapf} showed how to obtain completeness even with windowed planning. They do this using heuristic updates and disjoint agent groups. However, WinC-MAPF only proved completeness for optimal windowed MAPF solvers. They also show Single-Step CBS, an instantiation of their framework using CBS. Our goal is to extend the ideas of Single-Step CBS and the WinC-MAPF framework to bounded suboptimal solvers.

\subsection{Intended Takeaways}
Our main contribution is to extend the WinC-MAPF framework by showing how we can use a bounded suboptimal windowed solver while maintaining solution guarantees. Our windowed solver, Dynamic Agent Grouping-ECBS (DAG-ECBS) dynamically groups agents and finds a $w$-bounded suboptimal solution per agent group by calling a modified ECBS on each agent group. Additionally, DAG-ECBS incorporates heuristic updates and returns disjoint agent groups. We prove how DAG-ECBS is complete in the WinC-MAPF framework.


1. DAG-ECBS plans agent groups independently and merges and replans agent groups that conflict. Unlike prior works (e.g., MA-CBS \cite{sharon2015meta_agent_cbs}, Nested ECBS\cite{chan2022nested_ecbs}) which do this to gain speed-ups, we do this to ensure that compute disjoint agent groups to ensure that each one has a bounded suboptimal solution.

DAG-ECBS also requires some subtle modifications to the priority function and focal bound to make sure that the heuristic values stay $w$-admissible after applying the heuristic update function.

2. We prove how using DAG-ECBS within the WinC-MAPF framework is complete. This proof more broadly shows how future bounded suboptimal MAPF methods can use the WinC-MAPF framework and have completeness guarantees.

3. Our results show that varying the window size and suboptimality has different effects based on the specific map. Overall, we see that for tough maps or small window sizes, DAG-ECBS has superior success rate and scalability over windowed ECBS without completeness guarantees.

\section{Proof of Completeness} \label{sec:appendix-proof-of-completeness}
We formally prove how the DAG-ECBS maintains completeness in the WinC-MAPF framework. The core component is proving that the heuristic values of each group is $w$-admissible even with heuristic updates. 

\begin{theorem}\label{thm:w-admissible}
For every disjoint agent group, the heuristic maintained by \textsc{DAG-ECBS} is always \emph{$w$-admissible}, i.e., $h(\comp_{Gr}) \leq w \cdot h^*(\comp_{Gr})$.
\end{theorem}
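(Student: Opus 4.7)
The plan is to prove Theorem \ref{thm:w-admissible} by induction on the number of heuristic updates applied to any disjoint group configuration. The base case is immediate: the initial heuristic $h(\comp_{Gr}) = w \cdot h^{BD}(\comp_{Gr})$ is $w$-admissible because the summed single-agent backward Dijkstra values form a standard admissible lower bound on the joint sum-of-costs, so $h^{BD}(\comp_{Gr}) \leq h^*(\comp_{Gr})$.

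For the inductive step, I would assume $h(\comp_{Gr}) \leq w \cdot h^*(\comp_{Gr})$ for every group configuration and consider one DAG-ECBS iteration that returns a windowed plan with group terminal configuration $\comp^W_{Gr}$ followed by the update $h(\comp^0_{Gr}) \gets \max(h(\comp^0_{Gr}), c(\comp^0_{Gr}, \comp^W_{Gr}) + h(\comp^W_{Gr}))$ from Eq. \ref{eq:update}. The left argument of the max is $w$-admissible by hypothesis, so everything reduces to showing the bound
\begin{equation*}
c(\comp^0_{Gr}, \comp^W_{Gr}) + h(\comp^W_{Gr}) \leq w \cdot h^*(\comp^0_{Gr}).
\end{equation*}

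The core of the proof is a lemma that extracts this inequality from the modified Focal criteria of Group-ECBS. I would proceed in two steps. First, I would argue that the Anchor-queue minimum satisfies $\min_n F_3(n) \leq w \cdot h^*(\comp^0_{Gr})$: any CT node $n_0$ whose constraints are consistent with a true optimal windowed trajectory has $h_p(n_0) = 0$ and $\sum_i \min_v F_1(v \mid n_0) \leq h^*(\comp^0_{Gr})$, because the per-agent lower bounds sum to at most the joint optimum, which is in turn a lower bound on $h^*(\comp^0_{Gr})$ via the windowed decomposition. CBS's best-first expansion keeps such a node reachable throughout the search, so $\min_n F_3(n) \leq w \cdot h^*(\comp^0_{Gr})$ is maintained. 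Second, the Focal admission criterion Eq. \ref{eq:high-level-dagecbs-focal} forces the returned node $n_{ret}$ to satisfy $c(\comp^0_{Gr}, \comp^W_{Gr}) + w \cdot h^{BD}(\comp^W_{Gr}) + h_p(n_{ret}) \leq \min_n F_3(n)$. Substituting the parameterization $h(\comp^W_{Gr}) = w \cdot h^{BD}(\comp^W_{Gr}) + h_p(n_{ret})$ then collapses this into the desired bound. The analogous argument for the low-level Focal criterion Eq. \ref{eq:low-level-dagecbs-focal} ensures each per-agent $\min_v F_1(v \mid n)$ remains a valid lower bound despite the reweighting, which makes the preceding chain rigorous.

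For the update of intermediate configurations $\comp^t_{Gr}$ with $1 \leq t < W$, I would reapply the same inequality to the suffix plan starting at $\comp^t_{Gr}$, using that $h(\comp^W_{Gr})$ is already $w$-admissible and that the basic algebraic step $c + w \cdot h^* \leq w \cdot (c + h^*) \leq w \cdot h^*$ preserves $w$-admissibility under prefix decomposition. I expect the main obstacle to be the first step of the core lemma: carefully showing that at every point in Group-ECBS there is a CT node in the Anchor queue whose combined penalty $h_p$ plus scaled single-agent lower bound does not exceed $w \cdot h^*(\comp^0_{Gr})$. Heuristic constraints can inflate $h_p(n)$ in descendants of a branching, and verifying that the expansion strategy preserves at least one "well-behaved" node (one whose constraints do not force avoidable penalties on the optimal trajectory) demands a careful invariant-preservation argument over the high-level CT expansion.
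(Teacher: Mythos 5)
Your overall architecture matches the paper's: induction on heuristic updates, the same base case, the reduction to $c(\comp^0_{Gr},\comp^W_{Gr}) + h(\comp^W_{Gr}) \leq w\cdot h^*(\comp^0_{Gr})$, and the use of the Focal admission criterion (Eq.~\ref{eq:high-level-dagecbs-focal}) to bound the returned node by $\min_n F_3(n)$. However, the step you yourself flag as the main obstacle --- establishing $\min_n F_3(n) \leq w\cdot h^*(\comp^0_{Gr})$ by exhibiting an Anchor node $n_0$ consistent with an optimal trajectory that also has $h_p(n_0)=0$ --- is a genuine gap, and it is not how the paper closes the argument. Such a zero-penalty node need not survive: if the optimal windowed trajectory itself passes through (or terminates at) a previously penalized group configuration, every CT node consistent with that trajectory eventually incurs a heuristic conflict, and resolving it either forces $h_p(n)>0$ or constrains the agents off the optimal trajectory. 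So the invariant you would need to preserve can simply fail, and no "careful invariant-preservation argument" will rescue it in that form.

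The paper's proof sidesteps this entirely by \emph{not} requiring the penalty to vanish. For an arbitrary Anchor node $n$ it rewrites
$F_3(n) = h_p(n) + w\sum_{i\in Gr}\min_{v}F_1(v|n) = w\cdot c^*(\comp^0_{Gr},\comp'_{Gr}|n) + \bigl(h_p(\comp'_{Gr}) + w\cdot h^{BD}(\comp'_{Gr})\bigr) = w\cdot c^*(\comp^0_{Gr},\comp'_{Gr}|n) + h(\comp'_{Gr})$,
where $\comp'_{Gr}$ collects the low-level anchor minimizers; the penalty term is thus absorbed into the heuristic of the configuration $\comp'_{Gr}$, which the \emph{inductive hypothesis} bounds by $w\cdot h^*(\comp'_{Gr})$. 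Only then does the standard CBS argument (some Anchor node's constraints admit an optimal solution, so $c^*(\comp^0_{Gr},\comp'_{Gr}|n)+h^*(\comp'_{Gr})\leq h^*(\comp^0_{Gr})$) finish the bound. The missing idea in your proposal is precisely this re-use of the inductive hypothesis on the intermediate configuration $\comp'_{Gr}$ rather than on $\comp^W_{Gr}$ alone. Separately, your treatment of the intermediate configurations is off: the paper updates them via $h(\comp^t_{Gr}) \gets \max(h(\comp^t_{Gr}), U(\comp^0_{Gr},\comp^W_{Gr}) - w\cdot c(\comp^0_{Gr},\comp^t_{Gr}))$ and proves $w$-admissibility using consistency of $h^*$ and $c^* \leq c$; your chain $c + w\cdot h^* \leq w\cdot(c+h^*) \leq w\cdot h^*$ is not a valid inequality as written (the last step would need $c\leq 0$).
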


\begin{proof}
\textbf{Base case.} All possible group heuristics $h(\comp_{Gr})$ are $w$-admissible as they are initially set to $w \cdot  h^{BD}(\comp_{Gr}) = w \cdot \sum_{i \in Gr}h^{BD}_i(\comp_i)$ (Section \ref{sec:group-ecbs}). Note that $h_i^{BD}(\comp_i)$ is admissible as it is the individual agent's cost to reach the goal ignoring other agents.

\textbf{Inductive step.} We assume $h(\comp_{Gr}) \leq w \cdot h^*(\comp_{Gr})$ for all possible group configurations before planning. Then, DAG-ECBS is called and returns a set of disjoint agent groups and windowed paths. 
From the inductive hypothesis, we assume that existing $h$ are admissible for all group configurations. Each agent group's windowed path $\comp^W_{Gr}$ is the result of calling Group-ECBS where the solution node $n^{sol}$ contained the windowed solution. We show the following holds: 
\begin{align} \label{eq:inductive-step}
    & c(\comp^0_{Gr},\comp^W_{Gr}) + h(\comp^W_{Gr}) = \nonumber \\
    &c(\comp^0_{Gr},\comp^W_{Gr}) + w\cdot h^{BD}(\comp^W_{Gr}) + h_p(n^{sol}) \leq \nonumber \\
    & \min_{n \in \text{Anchor}} F_3(n) \! =  \!\min_{n \in \text{Anchor}} \! h_p(n) \!+\! w\!\cdot\!\sum_{i \in Gr} \! \min_{v \in \text{Anchor}_i} \!\!\! F_1(v|n) = \nonumber \\
    & \min_{n \in \text{Anchor}} h_p(n) + w \cdot \sum_{i \in Gr} F_1(\comp_i'|n) = \nonumber \\
    & \min_{n \in \text{Anchor}} h_p(\comp_{Gr}') + \sum_{i \in Gr} w \cdot c^*(\comp_i^0,\comp_i'|n) \! + \!w\cdot h^{BD}(\comp_i') = \nonumber \\
    & \min_{n \in \text{Anchor}} w \cdot c^*(\comp_{Gr}^0,\comp_{Gr}'|n) + h(\comp_{Gr}') \leq \nonumber \\ 
    & \min_{n \in \text{Anchor}} w \cdot c^*(\comp_{Gr}^0,\comp_{Gr}'|n) + w \cdot h^*(\comp_{Gr}') \leq w \cdot h^*(\comp_{Gr}^0)
\end{align}
where $\comp_i' \gets \arg \min_{v \in \text{Anchor}_i} F_1(v|n)$.

The first to second row uses the definition of heuristic penalties (Section \ref{sec:winf-mapf-background}, \ref{sec:group-ecbs}) and the fact that a valid solution in Group-ECBS must resolve heuristic conflicts.
The second to third row uses the focal threshold in Group-ECBS (Eq. \ref{eq:high-level-dagecbs-focal}). The third row substitutes using Eq. \ref{eq:high-level-dagecbs-anchor}.
The third to fourth row substitutes the minimum with $\comp'_i$. The next row substitutes using Eq. \ref{eq:low-level-ecbs-anchor}. We use $c^*(...)$ as opposed to $c(...)$ as the minimum $v$ in the low-level anchor search is guaranteed to search over optimal paths. 
The fifth to sixth row equality uses the definition of heuristic penalty (Section \ref{sec:group-ecbs}). 
The second to last inequality holds by the inductive hypothesis, and the last inequality holds as the Anchor in (Group-)ECBS admits the optimal solution.

Importantly, note that the $c(\comp^0_{Gr},\comp^W_{Gr}) + h(\comp^W_{Gr})$ term in the first row corresponds to the update equation $U(\comp^0_{Gr}, \comp^W_{Gr})$. Thus, applying the update equation still results in the updated heuristic being $w$-admissible.
\end{proof}

Given this fact, we can then reuse the standard Real-Time Heuristic Search / WinC-MAPF argument as described in Section \ref{sec:proof-of-completeness}. 
To our knowledge, this is the first time a focal search has been used in RTHS, so this proof/methodology may be of relevance to single-agent RTHS researchers.

One last subtlety is that since DAG-ECBS returns a windowed path, we also updated the heuristic values of the intermediate configurations $\comp_{Gr}^t$ for $t \in [1,W-1]$ based on the path to $\comp^W_{Gr}$. Specifically, we update using $h(\comp^t_{Gr}) \gets \max(h(\comp^t_{Gr}),  U(\comp^0_{Gr},\comp^W_{Gr}) - w \cdot c(\comp^0_{Gr},C^t_{Gr}))$ which we can prove is $w$-admissible if $U(\comp^0_{Gr},\comp^W_{Gr})$ is $w$-admissible (and hence retains overall completeness guarantees) by rearranging terms. In the following proof we drop the group subscript for notational simplicity:
\begin{align}
    & U(\comp^0,\comp^W) \leq w \cdot h^*(\comp^0) \nonumber \\
    & U(\comp^0,\comp^W) \leq w\cdot [c^*(\comp^0,\comp^t) + h^*(\comp^t)] \nonumber \\
    & U(\comp^0,\comp^W) - w \cdot c^*(\comp^0,\comp^t) \leq w \cdot h^*(\comp^t) \nonumber \\
    & U(\comp^0,\comp^W) - w \cdot c(\comp^0,\comp^t) \leq w \cdot h^*(\comp^t) 
\end{align}
The first line is the result of Eq. \ref{eq:inductive-step}. The transition to the second line is via $h^*$ being consistent (an optimal heuristic is always consistent). The transition to the third line re-arranges terms. The final line uses the fact that $c^*(...) \leq c(...)$ by definition, so subtracting by a larger value will still retain the inequality.


\end{document}